\newtheorem{lemma}{Lemma}
\newtheorem{proposition}{Proposition}
\newtheorem{remark}{Remark}
\newtheorem{corollary}{Corollary}
\long\def\symbolfootnote[#1]#2{\begingroup%
\def\thefootnote{\fnsymbol{footnote}}\footnote[#1]{#2}\endgroup}
\newtheorem{theorem}{Theorem}
\newtheorem{definition}{Definition}
\newcommand{\dv}{\mathbf} % determenistic vector
\newcommand{\mc}{\mathcal} % determenistic vector
\newcommand{\mkv}{-\!\!\!\!\minuso\!\!\!\!-}
\newcommand\blfootnote[1]{%
  \begingroup
  \renewcommand\thefootnote{}\footnote{#1}%
  \addtocounter{footnote}{-1}%
  \endgroup
}
\algnewcommand{\Inputs}[1]{%
  \State \textbf{Inputs:}
  \Statex \hspace*{\algorithmicindent}\parbox[t]{.8\linewidth}{\raggedright #1}
}
\algnewcommand{\Initialize}[1]{%
  \State \textbf{initialization}
  \Statex \hspace*{\algorithmicindent}\parbox[t]{.95\linewidth}{\raggedright #1}
}
\begin{document}
\fontencoding{OT1}\fontsize{10}{11}\selectfont

\title{On the Capacity of Cloud Radio Access Networks with Oblivious Relaying }
\author{I\~naki Estella Aguerri$^{\dagger}$ \qquad \quad Abdellatif Zaidi$^{\dagger}$ $^{\ddagger}$   \quad \quad Giuseppe Caire$^{\star}$  \quad \quad Shlomo Shamai (Shitz)$^{\ast}$ \vspace{0.3cm}\\
% $^{\dagger}$ Mathematics and Algorithmic Sciences Lab.\\
\small{$^{\dagger}$France Research Center, Huawei Technologies, Boulogne-Billancourt, 92100, France\\
$^{\ddagger}$ Universit\'e Paris-Est, Champs-sur-Marne, 77454, France\\
$\star$ Technische Universit¨at Berlin, 10587 Berlin, Germany \\
$\ast$ Technion Institute of Technology, Technion City, Haifa 32000, Israel\\
{\{\tt inaki.estella@huawei.com, abdellatif.zaidi@u-pem.fr, caire@tu-berlin.de, sshlomo@ee.technion.ac.il\}\vspace{-5mm}}}
}

% make the title area
\maketitle
\begin{abstract} 
We study the transmission over a network in which users send information to a remote destination through relay nodes that are connected to the destination via finite-capacity error-free links, i.e., a cloud radio access network. The relays are constrained to operate without knowledge of the users' codebooks, i.e., they perform oblivious processing -- The destination, or central processor, however, is informed about the users' codebooks. We establish a single-letter characterization of the capacity region of this model for a class of discrete memoryless channels in which the outputs at the relay nodes are independent given the users' inputs. We show that both relaying \`a-la Cover-El Gamal, i.e.,  compress-and-forward with joint decompression and decoding, and ``noisy network coding'', are optimal. The proof of the converse part establishes, and utilizes,  connections with the Chief Executive Officer (CEO) source coding problem under logarithmic loss distortion measure. 
Extensions to general discrete memoryless channels are also investigated. In this case, we establish inner and outer bounds on the capacity region. For memoryless Gaussian channels within the studied class of channels, we characterize the capacity under Gaussian\mbox{ channel inputs.} \blfootnote{The work of G. Caire is supported by an Alexander von Humboldt Professorship.
The work of S.Shamai has been supported by the European Union's Horizon 2020 Research And Innovation Programme, grant agreement no. 694630.} 
\end{abstract}

% The paper headers
%\markboth{IEEE Transactions on Communications}%
%{Submitted paper}

% keywords
%\sqrt{•}
\IEEEpeerreviewmaketitle

%%%%%%%%%%%%%%%%%
\vspace{-5mm}

\section{Introduction}

Cloud radio access networks (CRAN) provide a new architecture for next-generation wireless cellular systems in which base stations (BSs) are connected to a cloud-computing central processor (CP) via error-free finite-rate fronthaul links. This architecture is generally seen as an efficient means to increase spectral efficiency in cellular networks by enabling joint processing of the signals received by multiple BSs at the CP, thus alleviating the effect of interference. Other advantages include low cost deployment and flexible network utilization~\cite{Peng2015FHConstCRAN}.  

In a CRAN network, each BS acts essentially as a relay node; and so can in principle implement any relaying strategy, e.g., decode-and-forward~\cite[Theorem 1]{Cover:1979}, compress-and-forward~\cite[Theorem 6]{Cover:1979} or combinations of them. Relaying strategies in CRANs can be divided roughly into two classes: i) strategies that require the relay nodes to know the users' codebooks (i.e., modulation, coding),  such as decode-and-forward, compute-and-forward~\cite{Nazer:IT:2011, HongCaire:IT:2013} or variants of it, and ii) strategies in which the relay nodes operate without knowledge of the users' codebooks, often referred to as \textit{oblivious relay processing} (or \textit{nomadic transmitter})~\cite{Sanderovich2008:IT:ComViaDesc, Simeone2011:IT:CodebookInfoOutBand,Sanderovich:2009:IT}. This second class is composed essentially of strategies in which the relays implement forms of compress-and-forward~\cite{Cover:1979}, such as successive Wyner-Ziv compression~\cite{Hwan:2013:VT,ZhouYu:2013:JSAC, Park:2013:SPLett} and noisy-network coding \cite{Lim:IT:2011:NoisyNetwork}. Schemes combining the two apporaches have been shown to possibly outperform the best of the two in \cite{Estella2016CQCF}, especially in scenarios in which there are more users than relay nodes. 

In the spirit, however, a CRAN architecture is usually envisioned as one in which BSs operate as simple radio units (RUs) that are constrained to implement only the radio functionalities, such as analog-to-digital conversion and filtering, while the baseband functionalities are migrated to the CP. For this reason, while relaying schemes that involve partial or full decoding of the users' codewords can sometimes offer rate gains, they do not seem to be suitable in practice --  In fact, such schemes assume that all or a subset of the relay nodes are fully aware (at all times) of the codebooks and encoding used by the users; and the signaling required to convey such information is generally prohibitive, particularly as networks become large. Instead, schemes in which relays perform oblivious processing are preferred. Oblivious processing was first introduced in \cite{Sanderovich2008:IT:ComViaDesc}. The basic idea is that of using \textit{randomized encoding} to model lack of information about codebooks.  For related works, the reader may refer to~\cite{Simeone2011:IT:CodebookInfoOutBand, Dytso2014, Tian2011, Simeone2011:IT:RobustCommDetProcc}. In particular, ~\cite{Simeone2011:IT:RobustCommDetProcc} extends the original definition of oblivious processing of \cite{Sanderovich2008:IT:ComViaDesc}, which rules out time-sharing, to include settings in which encoders are allowed to switch among different codebooks and oblivious nodes are unaware of the codebooks but are given, or can acquire, time- or frequency-schedule information, which is generally less difficult to obtain. The framework is termed therein as ``\textit{oblivious processing with enabled time-sharing}".  

In this work, we consider transmission over a CRAN in which the relay nodes are constrained to operate without knowledge of the users' codebooks, i.e., are oblivious, and only know time- or frequency-sharing information. Focusing on a class of discrete memoryless channels in which the relay outputs are independent conditionally on the users' inputs, we establish a single-letter characterization of the capacity region of this class of channels. We show that relaying \`a-la Cover-El Gamal, i.e.,  compress-and-forward with joint decompression and decoding, or noisy network coding, are optimal. For the proof of the converse part, we utilize useful connections with the Chief Executive Officer (CEO) source coding problem under logarithmic loss distortion measure~\cite{Courtade2014LogLoss}. For memoryless Gaussian channels within this class, we characterize the capacity under Gaussian channel inputs. Extensions to general discrete memoryless channels are also investigated. In this case, we establish inner and outer bounds on the capacity region.

%\subsection*{Notations}
\textit{Notation:} Throughout, we use the following notation. Lower case letters denote scalars, e.g., $x$; upper case letters denote random variables, e.g., $X$, boldface lower case letters denote vectors, e.g., $\dv x$, and boldface upper case letters denote matrices, e.g., $\dv X$. Calligraphic letters denote sets, e.g., $\mathcal{X}$; and the cardinality of set $\mc X$ is denoted by $|\mc X|$. For a set of integers $\mc K$, the notation $X_{\mc K}$ denotes the set of random variables $\{X_k\}$ with indices $k$ in the set $\mc K$, i.e., $X_{\mc K}=\{X_k\}_{k \in \mc K}$.

\section{System Model}\label{sec:System}

Consider the discrete memoryless CRAN model shown in Figure~\ref{fig:Schm}. In this model, a set of users communicate with a central processor (CP) through a set of relay nodes that are connected to the CP via error-free finite-rate fronthaul links. Let $\mc L=\{1,\hdots,L\}$ denote the set of users, and $\mc K=\{1,\hdots,K\}$ denote the set of relays, and let $C_k$ be the capacity of the link connecting relay node $k$ to the CP, $k \in \mc K$. 

\noindent Similar to~\cite{Simeone2011:IT:CodebookInfoOutBand}, the relays nodes are constrained to operate without knowledge of the users' codebook and only know time-sharing information, i.e., oblivious relay processing with enabled time sharing. The obliviousness of the relay nodes to the actual codebooks is modeled by the transmitters picking at random their selected codebooks and the relays not aware of the actual codebooks indices. Specifically, the codeword $X^n(F_l, M_l, Q^n)$ transmitted by encoder $l$, $l \in \mc L$, depends not only on the message $M_l \in [1,2^{nR_l}]$, but also on the index $F_l$ which runs over all possible codebooks of the given rate $R_l$, i.e., $F_l \in [1,|\mc X_l|^{n2^{nR_l}}]$ and the time sharing sequence $Q^n$. Formally, the model is defined as follows.  
\begin{enumerate}
\item \textit{Codebooks:} 
Transmitter $l$, $l\in \mathcal{L}$, sends message $m_l\in [1,2^{nR_l}]$ to the CP using a codebook from a set $\{\mathcal{C}_l(F_l)\}$ that is indexed by $F_l\in [1,|\mathcal{X}_l|^{n2^{nR_l}}]$. The index $F_l$ is picked at random and shared with CP, but not to the relays.
\item \textit{Time-sharing sequence:} All terminals, including the relay nodes, are aware of a time-sharing sequence $Q^n$, distributed as $p_{Q^n}(q^n)=\prod_{i=1}^n p_{Q}(q_i)$ for a pmf $p_{Q}(q)$. 

\item  \textit{Encoding functions:} The encoding function at user $l$, $l \in \mc L$, is defined by a pair $(p_{X_l},\phi_l)$ where $p_{X_l}$ is a single-letter pmf and $\phi_l$ is a mapping $\phi_l:[1,|X_l|^{n2^{nR_l}}]\times [1,2^{nR_l}]\times \mathcal{Q}^n\rightarrow \mathcal{X}_l^n$, that maps the given codebook index $F_l$, message $m_l$ and time-sharing variable $q^n$ to a channel input $x_{l}^n=\phi_l(F_l,m_l, q^n)$. The probability of selecting codebook $F_l$, $F_l\in [1,|\mathcal{X}_l|^{n2^{nR_l}}]$, is given by 
\begin{align}
p_{F}(f,q^n) = \prod_{m \in [1,2^{nR}]} p_{X^n|Q^n}(\phi_l(f,m)|q^n), 
\end{align}
where $p_{X^n|Q^n}(x^n|q^n) = \prod_{i=1}^n p_{X|Q}(x_i|q_i)$ for some given pmf $p_{X|Q}(x|q)$.

\item \textit{Relaying functions:} Relay node $k$ , $k \in \mathcal{K}$, is unaware of the codebook indices $F_{\mathcal{L}}= (F_1,\ldots, F_L)$, and maps the received sequence $y_k^n\in \mathcal{Y}_k$ into an index $J_k\in[1,2^{nC_k}]$ as $J_k=\phi_k^r(y_k^n,q^n)$, that it then sends to the CP over the error-free link of capacity $C_k$. The index $J_k$ is then sent the to the CP over the link of capacity $C_k$. 

\item \textit{Decoding function:} Upon receiving the indices $J_{\mathcal{K}} = (J_1\ldots,J_K)$, the CP estimates the users' messages as
\begin{align}
(\hat{m}_1,\ldots,\hat{m}_L)= g(F_1,\ldots,F_L,J_1,\ldots,J_K,q^n),
\end{align}
where $g:[1,|\mathcal{X}_1|^{n2^{nR_1}}]\times\cdots\times [1, |\mathcal{X}_L|^{n2^{nR_L}}]\times [1, 2^{nC_1}]\times \cdots \times [1, 2^{nC_L}]\times \mathcal{Q}^n\rightarrow [1, 2^{nR_1}] \times \hdots \times [1, 2^{nR_L}]$ is the decoder mapping. 
\end{enumerate}

\begin{definition}
A $(n,R_1,\ldots,R_L)$ code for the studied model with oblivious relay processing and enabled time-sharing consist of $L$ encoding functions
%\begin{align}
$\phi_l:[1,|X_l|^{n2^{nR_l}}]\times [1,2^{nR_l}]\times \mathcal{Q}^n\rightarrow \mathcal{X}^n_l$,
%\end{align}
$K$ relaying functions 
$\phi_k^r: \mathcal{Y}^n_k\times \mathcal{Q}^n\rightarrow [1,2^{nC_k}]$,
and a decoding function 
%\begin{align}
$g:[1, |\mathcal{X}_1|^{n2^{nR_1}}]\times\cdots\times [1, |\mathcal{X}_L|^{n2^{nR_L}}]\times [1, 2^{nC_1}]\times\cdots\times [1, 2^{nC_L}] \times \mathcal{Q}^n \rightarrow [1, 2^{nR_1}] \times \hdots [1, 2^{nR_L}]$.
%\end{align}
\end{definition}

\begin{definition}
A rate tuple $(R_{1},\ldots,R_{L})$ is said to be achievable if, for any $\epsilon>0$, there exist a sequence of codes, such that, for sufficiently long blocklength $n$, each user's message can be decoded by the CP at rate at least $R_k$ with vanishing probability of error, i.e.,
\begin{align}
 \mathrm{Pr}\{(M_1,\hdots, M_L)\neq (\hat{M}_1,\hdots, \hat{M}_L)\}\leq \epsilon. 
 \end{align}
\end{definition}

\noindent For given $\mathcal{C}_{\mathcal{K}}$, the capacity region $\mathcal{R}(C_{\mathcal{K}})$ is the closure of all achievable rate tuples $(R_{1},\ldots,R_{L})$.
%----------------------------------------
\begin{figure}[t!]
\centering
\includegraphics[width=0.495\textwidth]{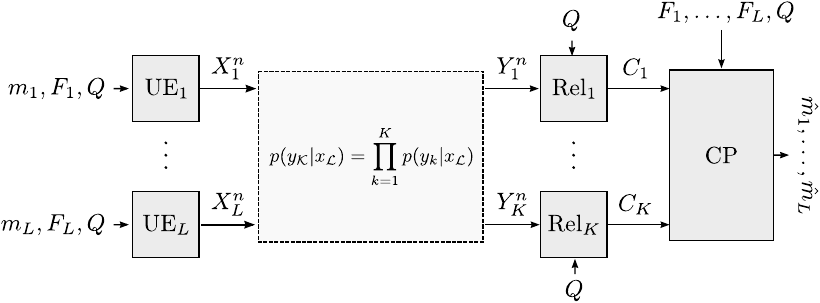}
\vspace{-5mm}
\caption{CRAN model with oblivious relaying and time-sharing.} 
\label{fig:Schm}
\end{figure}

\vspace{0.2cm}

\noindent Due to space limitations, some of the results of this paper are only outlined or given without proofs. The detailed proofs can be found in~\cite{Estella2017OCRAN_Proofs}.

%----------------------------------------
\subsection{Class of Discrete Memoryless Channels}

In this work, we establish the capacity region of the following class of discrete memoryless CRAN channels with oblivious relay processing and enabled time-sharing. In this class, the channel outputs at the relay nodes are independent conditionally on the users' inputs. That is,  
\begin{equation}
p(y_{\mathcal{K}}^n|x_{\mathcal{L}}^n)=\prod_{i=1}^n\prod_{k=1}^Kp(y_{k,i}|x_{\mathcal{L},i}), 
\end{equation}
or, equivalently, the following Markov chain holds,
\begin{align}
Y_{k,i} \mkv X_{\mathcal{L},i} \mkv Y_{\mathcal{K}/k,i}\quad \text{for } k \in \mc K\: \text{and} \: i \in [1,n]. 
\label{eq:MKChain_pmf}
\end{align}
% This class of channels includes most CRAN channels considered in the literature, in particular, the memoryless Gaussian MIMO channel discussed in Section \ref{ssec:Gauss}, e.g., \cite{Hwan:2013:VT,ZhouYu:2013:JSAC, Park:2013:SPLett, Lim:IT:2011:NoisyNetwork, Somekh:2007:IT,DelCoso:2009:TWir, Sanderovich:2009:IT,DBLP:journals/corr/ZhouX0C16}.  

\subsection{Oblivious Relaying with Enabled Time-Sharing}
Similar to~\cite{Simeone2011:IT:CodebookInfoOutBand}, the above constraint of oblivious relay processing with enabled time-sharing means that, in the absence of information regarding the indices $F_{\mathcal{L}}$ and the messages $M_{\mathcal{L}}$, a codeword $x_l^n(F_l,m_l|q^n)$ taken from a $(n,R_l)$ codebook has independent but non-identically distributed entries. 
 \vspace{0.2cm}
\begin{lemma}\label{lem:IIDinput}
Without the knowledge of the selected codebooks indices $(F_1,\ldots,F_L)$, the distribution of the transmitted codewords conditioned on the time-sharing sequence are given by $p_{X_l^n|Q^n}(x_l^n|q^n)=\prod_{i=1}^np_{X_l|Q}(x_{l,i}|q_i)$. Consequently, the channel output at relay $k$, $Y_k^n$  is distributed as
\begin{align}
&p_{Y_k^n|Q^n}(y^n_k|q^n)\nonumber
=\prod_{i=1}^n\sum_{x_1,\ldots, x_L}p_{Y_k|X_{\mathcal{L}}}({y}_{k,i}|x_{\mathcal{L}})\prod_{i=1}^Lp_{X_l|Q}(x_l|q_i).\nonumber
\end{align}
\end{lemma}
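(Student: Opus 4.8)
The plan is to establish the two assertions in sequence: first that the per-user codeword marginal, averaged over the unknown codebook index $F_l$ and the message $M_l$, collapses to the prescribed product distribution, and then to push this marginal through the memoryless channel to recover the stated relay-output law.

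First I would fix a user $l$, condition on $Q^n=q^n$, and recall that the transmitted codeword is $X_l^n=\phi_l(F_l,M_l,q^n)$, where $M_l$ is uniform on $[1,2^{nR_l}]$ and independent of $F_l$, whose law is the $p_F(f,q^n)$ specified in the model. The crucial observation is that $p_F(f,q^n)$ factorizes as a product over messages of the i.i.d.\ law $\prod_i p_{X_l|Q}(\cdot|q_i)$; equivalently, a random codebook is built by drawing each of its $2^{nR_l}$ codewords independently from that i.i.d.\ distribution. Writing
\[
p_{X_l^n|Q^n}(x_l^n|q^n)=\sum_{f}\sum_{m}p_F(f,q^n)\,2^{-nR_l}\,\mathbf{1}\{\phi_l(f,m,q^n)=x_l^n\},
\]
I would, for each fixed $m$, evaluate the inner sum over $f$. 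Viewing $f$ as the tuple of codewords assigned to the $2^{nR_l}$ messages, the sum factorizes: the factor indexed by message $m$ contributes $\prod_i p_{X_l|Q}(x_{l,i}|q_i)$ through the indicator, while each of the remaining factors marginalizes to one. Hence the inner sum equals $\prod_i p_{X_l|Q}(x_{l,i}|q_i)$ for every $m$, and averaging the uniform $2^{-nR_l}$ over $m$ leaves $p_{X_l^n|Q^n}(x_l^n|q^n)=\prod_i p_{X_l|Q}(x_{l,i}|q_i)$. This is precisely the randomized-encoding marginalization that makes the channel input appear i.i.d.\ to an oblivious relay.

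For the second claim I would combine this with two structural facts of the model: the users choose codebooks and messages independently, so $p_{X_{\mathcal{L}}^n|Q^n}=\prod_{l}p_{X_l^n|Q^n}$; and the channel is memoryless, so $p_{Y_k^n|X_{\mathcal{L}}^n}(y_k^n|x_{\mathcal{L}}^n)=\prod_i p_{Y_k|X_{\mathcal{L}}}(y_{k,i}|x_{\mathcal{L},i})$ (only memorylessness of $Y_k$ is needed here, not the full conditional-independence structure of the channel class). Marginalizing the channel over $X_{\mathcal{L}}^n$ and substituting the product form just derived, the sum over $x_{\mathcal{L}}^n$ decouples across the time index $i$, giving $p_{Y_k^n|Q^n}(y_k^n|q^n)=\prod_{i=1}^n\sum_{x_{\mathcal{L}}}p_{Y_k|X_{\mathcal{L}}}(y_{k,i}|x_{\mathcal{L}})\prod_{l=1}^L p_{X_l|Q}(x_l|q_i)$, which is the claimed formula.

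I do not anticipate a genuine obstacle; the argument is essentially bookkeeping built on the model's independence and memorylessness assumptions. The one step that warrants care is the factorization of the sum over codebooks $f$ in the first part: one must treat $f$ as ranging over all assignments of codewords to messages and verify that, against the indicator $\mathbf{1}\{\phi_l(f,m,q^n)=x_l^n\}$, only the single factor indexed by the transmitted message survives while the other $2^{nR_l}-1$ factors each sum to one. Once this factorization is made explicit, both displayed identities follow immediately.
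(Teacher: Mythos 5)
Your proof is correct. The paper itself states Lemma~\ref{lem:IIDinput} without proof (the details are deferred to the companion proofs document), but your argument --- writing $p_{X_l^n|Q^n}$ as a marginalization over $(F_l,M_l)$, factorizing the sum over codebook indices $f$ across messages so that only the factor attached to the transmitted message survives against the indicator while the remaining $2^{nR_l}-1$ factors each sum to one, and then pushing the resulting i.i.d.\ product law through the memoryless channel --- is precisely the standard randomized-encoding computation that the model's definition of $p_F(f,q^n)$ is constructed to deliver, and your remark that only memorylessness of the channel toward relay $k$ (not the conditional-independence structure across relays) is needed for the second display is also accurate.
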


\section{Main Results}\label{sec:Main}

\subsection{Capacity Region of Studied Class of CRAN Channels}

The main result of this paper is a single-letter characterization of the capacity region of the class of channels with oblivious relaying and enabled time-sharing that satisfy~\eqref{eq:MKChain_pmf}. The following theorem states the result.

\vspace{0.1cm}
\begin{theorem}\label{th:MK_C_Main}
For the class of discrete memoryless channels given by~\eqref{eq:MKChain_pmf} with oblivious relay processing and enabled time-sharing, a rate tuple $(R_1,\ldots, R_L)$ is achievable if and only if for all $\mathcal{T} \subseteq \mathcal{L}$ and for all $\mathcal{S} \subseteq \mathcal{K}$, we have
\begin{align}
\sum_{t\in \mathcal{T}}R_t\leq& \sum_{s\in \mathcal{S}} [C_s-I(Y_{s};U_{s}|X_{\mathcal{L}},Q)]\nonumber  \\
&+ I(X_{\mathcal{T}};U_{\mathcal{S}^c}|X_{\mathcal{T}^c},Q),\label{eq:MK_C_Main}
\end{align}
for some joint measure of the form $p(q)\prod_{l=1}^L p(x_l|q)\prod_{k=1}^Kp(y_k|x_{\mathcal{L}})\prod_{k=1}^{K}p(u_k|y_k,q)$.
\end{theorem}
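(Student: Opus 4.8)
The plan is to prove the two inclusions separately, using a compress-and-forward argument for achievability and reserving the genuinely delicate work for the converse, where the logarithmic-loss CEO connection does the heavy lifting.

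\textbf{Achievability.} I would use compress-and-forward with joint decompression-and-decoding (equivalently, noisy network coding). Each user $l$ generates its codebook i.i.d.\ according to $p(x_l|q)$ and each relay $k$ a quantization codebook for $U_k$ according to $p(u_k|y_k,q)$; relay $k$ quantizes $Y_k^n$ to a $U_k^n$-codeword and forwards its index over the capacity-$C_k$ link. The CP jointly decodes all messages $M_{\mathcal{L}}$ and all descriptions. A standard error analysis produces a cut-set-type constraint for every pair $(\mathcal{T},\mathcal{S})$, where $\mathcal{T}$ indexes the messages decoded in error and $\mathcal{S}$ the descriptions decoded in error; the conditional independence~\eqref{eq:MKChain_pmf} factors the quantization-cost term as $\sum_{s\in\mathcal{S}} I(Y_s;U_s|X_{\mathcal{L}},Q)$ and reduces the remaining term to $I(X_{\mathcal{T}};U_{\mathcal{S}^c}|X_{\mathcal{T}^c},Q)$, giving exactly~\eqref{eq:MK_C_Main}. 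Crucially, Lemma~\ref{lem:IIDinput} shows that the obliviousness constraint is consistent with precisely the product input law $p(q)\prod_l p(x_l|q)$ used above, so obliviousness costs nothing.

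\textbf{Converse: reduction to a cut expression.} Fix $(\mathcal{T},\mathcal{S})$. Since the messages are mutually independent and independent of the codebook indices $F_{\mathcal{L}}$, Fano's inequality gives
\begin{align}
n\!\sum_{t\in\mathcal{T}}\! R_t \le I(M_{\mathcal{T}};J_{\mathcal{K}}\mid M_{\mathcal{T}^c},F_{\mathcal{L}},Q^n)+n\epsilon_n. \nonumber
\end{align}
Conditioned on $(M_{\mathcal{T}^c},F_{\mathcal{L}},Q^n)$ the codewords $X_{\mathcal{T}^c}^n$ are fixed and $M_{\mathcal{T}}$ acts on $J_{\mathcal{K}}$ only through $X_{\mathcal{T}}^n$, so the right side collapses to $I(X_{\mathcal{T}}^n;J_{\mathcal{K}}\mid X_{\mathcal{T}^c}^n,Q^n)$, the passage being enabled by the obliviousness structure of Lemma~\ref{lem:IIDinput}, under which $J_{\mathcal{K}}$ is conditionally independent of $F_{\mathcal{L}}$ given $(X_{\mathcal{L}}^n,Q^n)$. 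Splitting $J_{\mathcal{K}}=(J_{\mathcal{S}},J_{\mathcal{S}^c})$ yields one term $I(X_{\mathcal{T}}^n;J_{\mathcal{S}^c}\mid X_{\mathcal{T}^c}^n,Q^n)$, which single-letterizes to $I(X_{\mathcal{T}};U_{\mathcal{S}^c}|X_{\mathcal{T}^c},Q)$, and a second term which I expand as an entropy difference and bound, using~\eqref{eq:MKChain_pmf} to make the $J_s$ conditionally independent given $(X_{\mathcal{L}}^n,Q^n)$, by
\begin{align}
\sum_{s\in\mathcal{S}}\!\big[H(J_s\mid Q^n)-I(J_s;Y_s^n\mid X_{\mathcal{L}}^n,Q^n)\big], \nonumber
\end{align}
with $H(J_s\mid Q^n)\le nC_s$.

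\textbf{Converse: the log-loss single-letterization (main obstacle).} What remains is to single-letterize $I(J_s;Y_s^n\mid X_{\mathcal{L}}^n,Q^n)$ into $n\,I(Y_s;U_s|X_{\mathcal{L}},Q)$ and to do so with auxiliaries $U_{s,i}$ that are simultaneously compatible with the single-letterization of the $\mathcal{S}^c$-term, so that one joint law of the stated product form supports every $(\mathcal{T},\mathcal{S})$ bound at once. This is exactly the point where the CEO problem under logarithmic loss is decisive: reliable recovery of $M_{\mathcal{L}}$ is a log-loss reconstruction of $X_{\mathcal{L}}^n$ at the CP, and the Courtade--Weissman converse technique identifies each $U_{k,i}$ as a function of $(J_k,Q^n)$ (together with a time-consistent side sequence), verifies the Markov chain $U_{k,i}-(Y_{k,i},Q_i)-(X_{\mathcal{L},i},Y_{\mathcal{K}\setminus k,i})$ demanded by $p(u_k|y_k,q)$, and converts the entropy differences above into the target mutual informations \emph{without loss}. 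The hard part is precisely this joint identification -- ensuring the auxiliaries extracted from the $\mathcal{S}$- and $\mathcal{S}^c$-terms coincide and obey the conditional-independence factorization across $k$; here I would exploit the log-loss identity equating expected distortion with a conditional entropy to linearize the constraints. A standard cardinality bound on each $\mathcal{U}_k$ and absorption of the time index into $Q$ then match the converse region to~\eqref{eq:MK_C_Main}, and the coincidence of the two regions proves the claimed characterization.
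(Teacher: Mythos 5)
You have the right skeleton, and the first two thirds of your converse are genuinely correct: the achievability is exactly the paper's (NNC, or compress-and-forward with joint decompression--decoding), and your $n$-letter reduction is equivalent to the paper's chain of inequalities --- Fano, then data processing through $X_{\mathcal{T}}^n$ (valid because $J_{\mathcal{K}}$ depends on $(M_{\mathcal{L}},F_{\mathcal{L}})$ only through $(X_{\mathcal{L}}^n,Q^n)$, and because after averaging over the codebook indices Lemma~\ref{lem:IIDinput} makes the input law i.i.d.), then the split over $(\mathcal{S},\mathcal{S}^c)$ using $H(J_s)\le nC_s$ and the conditional independence of the $J_s$ given $(X_{\mathcal{L}}^n,Q^n)$, which gives
\begin{align}
n\sum_{t\in\mathcal{T}}R_t\le I(X_{\mathcal{T}}^n;J_{\mathcal{S}^c}|X_{\mathcal{T}^c}^n,Q^n)+\sum_{s\in\mathcal{S}}\bigl[nC_s-I(Y_s^n;J_s|X_{\mathcal{L}}^n,Q^n)\bigr]+n\epsilon_n.\nonumber
\end{align}

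The gap is precisely the step you label ``the main obstacle'' and then outsource to ``the Courtade--Weissman technique'': you never exhibit the auxiliaries and the time-sharing variable that make both single-letterizations hold \emph{simultaneously} under a law of the stated product form, and this identification is the actual content of the paper's converse, not something one can import as a black box. The paper takes $U_{k,i}=(J_k,Y_k^{i-1})$ and, crucially, absorbs the off-time channel inputs into the time-sharing variable, $\bar{Q}_i=(X_{\mathcal{L}}^{i-1},X_{\mathcal{L},i+1}^n,\tilde{Q})$, with $Q=(\bar{Q}_{Q'},Q')$ at the end. With this pair, memorylessness gives the exact identity $I(Y_k^n;J_k|X_{\mathcal{L}}^n,\tilde{Q})=\sum_{i}I(Y_{k,i};U_{k,i}|X_{\mathcal{L},i},\bar{Q}_i)$, the i.i.d.\ structure of Lemma~\ref{lem:IIDinput} lets the $\mathcal{S}^c$-term single-letterize to $\sum_i I(X_{\mathcal{T},i};U_{\mathcal{S}^c,i}|X_{\mathcal{T}^c,i},\bar{Q}_i)$ by conditioning arguments, and --- the point your sketch misses --- conditioned on $\bar{Q}_i$ one has the chain $U_{k,i}-Y_{k,i}-X_{\mathcal{L},i}-Y_{\mathcal{K}\setminus k,i}-U_{\mathcal{K}\setminus k,i}$, which is exactly the factorization $\prod_k p(u_k|y_k,q)$ demanded by the theorem. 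Your stated chain $U_{k,i}-(Y_{k,i},Q_i)-(X_{\mathcal{L},i},Y_{\mathcal{K}\setminus k,i})$ with a single-letter $Q_i$ is not sufficient, and the joint factorization across relays actually \emph{fails} without enlarging $Q$: the auxiliaries $U_{k,i}=(J_k,Y_k^{i-1})$ are coupled across $k$ through the common inputs $X_{\mathcal{L},j}$, $j\neq i$ (each $J_k$ sees all of $Y_k^n$), so given only $(Y_{\mathcal{K},i},q^n)$ they are not conditionally independent across $k$; conditioning on $(X_{\mathcal{L}}^{i-1},X_{\mathcal{L},i+1}^n)$ is what severs this coupling. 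The log-loss CEO connection in the paper is a guide to choosing these auxiliaries (they are the same ones as in the log-loss CEO converse), but there is no CEO theorem whose hypotheses your channel-coding setup satisfies that would deliver the compatibility and factorization for you; so as written, your converse is incomplete at its decisive step.
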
 

\vspace{0.1cm}

\noindent\textbf{Proof}: The proof of converse part of Theorem~\ref{th:MK_C_Main} is relegated to Section~\ref{sec:Converse}.  The proof of the direct part can be obtained by applying the noisy network coding (NNC) scheme of~\cite[Theorem 1]{Lim:IT:2011:NoisyNetwork}. Alternatively, the rate region of Theorem~\ref{th:MK_C_Main} can also be achieved by a scheme that generalizes that of~\cite[Theorem 1]{Sanderovich:2009:IT}, which is established in the case of a single transmit node, to the case of multiple users and accommodate time-sharing. By  opposition to the NNC scheme, the latter scheme is based on compress-and-forward \`a la Cover-El Gamal with joint decoding and decompression at the CP (CoF-JD).
\qed

\begin{remark}
Key element for the proof of Theorem~\ref{th:MK_C_Main} is the connection with the chief executive officer (CEO) problem. For the case of $m$ encoders, $m \geq 3$, while characterization of the optimal rate-distortion region of this problem for general distortion measures has eluded the information theory, a characterization of the optimal region in the case of logarithmic loss distortion measure has been provided recently in~\cite{ Courtade2014LogLoss}. 
\end{remark}

\begin{remark}
The sum-rate of Theorem~\ref{th:MK_C_Main} can also be achieved by a scheme in which the CP decodes explicitly the compression indices first, and then decodes the users' transmitted messages, i.e., decompression and decoding is not performed jointly. A similar observation is found in~\cite[Theorem 2]{DBLP:journals/corr/ZhouX0C16}. 
\end{remark}

\subsection{Memoryless Gaussian Model}\label{ssec:Gauss}

In this section, we consider a memoryless Gaussian MIMO model of the studied CRAN with oblivious relay processing and enabled time sharing. The channel output at relay node $k$, equipped with $M_k$ receive antennas, is given by 
\begin{equation}
\mathbf{Y}_k = \mathbf{H}_{k,\mathcal{L}}\mathbf{X}+\mathbf{N}_k,
\label{mimo-gaussian-model}
\end{equation}
where $\mathbf{X}\triangleq[\mathbf{X}_1^T,\ldots,\mathbf{X}_L^T]^T$, $\mathbf{X}_{l}\in \mathds{C}^{N_l}$ is the channel input vector of user $l$, $l \in \mc L$, $\mathbf{H}_{k,\mathcal{L}} \triangleq [\mathbf{H}_{k,1},\ldots, \mathbf{H}_{k,L}]$, $\mathbf{H}_{k,l}\in \mathds{C}^{M_k\times N_l}$ is the channel matrix connecting  user $l$ to relay node $k$, and $\mathbf{N}_k\in\mathds{C}^{M_k}$ is the noise vector at relay node $k$, assumed to be Gaussian with $\mathbf{N}_k\sim\mathcal{CN}(\mathbf{0},\mathbf{\Sigma}_k)$. The transmission is subjected to power constraint $\text{Tr}(\mathbf{K}_{l}) \leq P_k$, where $\mathbf{K}_{l}= \mathrm{E}[\mathbf{X}_l\mathbf{X}^H_{l}]$ is the covariance matrix of $\mathbf{X}_{l}$. The noises at the relay nodes are assumed to be independent; and so the studied Gaussian model satisfies the Markov chain~\eqref{eq:MKChain_pmf}.

The result of Theorem~\ref{th:MK_C_Main} can be extended to continuous channels using standard techniques; and so it characterizes the capacity region of the model~\eqref{mimo-gaussian-model}. The computation of this region, however, is not easy as it requires  finding the optimal choices of the involved auxiliary random variables $U_1, \hdots, U_K$. The following theorem characterizes more explicitly the capacity region when the users are constrained to employ Gaussian signaling,
 i.e., for $Q=q$, $\mathbf{X}_{l,q}\sim\mathcal{CN}(\mathbf{0},\mathbf{K}_{l,q})$, for all $l \in \mc L$. 
\vspace{0.1cm}

\begin{theorem}
\label{th:GaussSumCap}
If the input vectors are Gaussian, the capacity region of the memoryless Gaussian MIMO model~\eqref{mimo-gaussian-model} is given by the set of all rate tuples $(R_1,\ldots,R_L)$ satisfying that for all $ \mathcal{T} \subseteq \mathcal{L}$ and all $\mathcal{S} \subseteq \mathcal{K}$
\begin{align}
\sum_{t\in\mathcal{T}}R_{t} \leq&
 \sum_{k\in \mathcal{S}}\left[C_k-\log\frac{|\mathbf{\Sigma}_k|}{|\mathbf{\Sigma}^{-1}_k-\mathbf{B}_k|}\right] \\
 &+ \log \frac{
|\sum_{k\in\mathcal{S}^{c}}\mathbf{H}_{k,\mathcal{T}}^{H}
\mathbf{B}_{k}
\mathbf{H}_{k,\mathcal{T}}+\mathbf{K}^{-1}_{\mathcal{T}}|
}{
|\mathbf{K}_{\mathcal{T}}^{-1}|,
},\label{eq:GaussSumCap}
\end{align}
for some $\mathbf{0}\preceq \mathbf{B}_k\preceq \mathbf{\Sigma}_{k}^{-1}$, where $\mathbf{H}_{k,\mathcal{T}}$ denotes the channel matrix connecting the input $\mathbf{X}_{\mathcal{T}}$ to the output $\mathbf{Y}_k$, formed by concatenating the matrices $\mathbf{H}_{k,l}$, $l\in \mathcal{T}$, horizontally.
\end{theorem}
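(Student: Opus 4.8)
The plan is to obtain \eqref{eq:GaussSumCap} by specializing the single-letter region \eqref{eq:MK_C_Main} of Theorem~\ref{th:MK_C_Main} (extended to continuous alphabets) to Gaussian inputs, the only remaining freedom being the choice of the test channels $\{p(u_k|y_k,q)\}_{k\in\mathcal{K}}$; the two inclusions then reduce to showing that Gaussian test channels are optimal. For the direct part, fix the Gaussian inputs and, for each relay $k$, take $U_k=\mathbf{Y}_k+\hat{\mathbf{Z}}_k$ with $\hat{\mathbf{Z}}_k\sim\mathcal{CN}(\mathbf{0},\hat{\mathbf{\Sigma}}_k)$ mutually independent and independent of $(\mathbf{X},\mathbf{N}_{\mathcal{K}})$, and set $\mathbf{B}_k=(\mathbf{\Sigma}_k+\hat{\mathbf{\Sigma}}_k)^{-1}$. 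As $\hat{\mathbf{\Sigma}}_k$ ranges over the positive semidefinite cone, $\mathbf{B}_k$ ranges over the whole set $\mathbf{0}\preceq\mathbf{B}_k\preceq\mathbf{\Sigma}_k^{-1}$. Since, conditioned on $\mathbf{X}_{\mathcal{L}}$, the only randomness left in $\mathbf{Y}_k$ is $\mathbf{N}_k$, a direct evaluation gives $I(\mathbf{Y}_k;U_k|\mathbf{X}_{\mathcal{L}},Q)=\log(|\mathbf{\Sigma}_k+\hat{\mathbf{\Sigma}}_k|/|\hat{\mathbf{\Sigma}}_k|)$, while regarding $\{U_k\}_{k\in\mathcal{S}^c}$, after subtracting the known contribution of $\mathbf{X}_{\mathcal{T}^c}$, as a Gaussian vector channel driven by $\mathbf{X}_{\mathcal{T}}$ with block-diagonal noise covariance $\{\mathbf{\Sigma}_k+\hat{\mathbf{\Sigma}}_k\}_{k\in\mathcal{S}^c}$ yields, through the Sylvester determinant identity, $I(\mathbf{X}_{\mathcal{T}};U_{\mathcal{S}^c}|\mathbf{X}_{\mathcal{T}^c},Q)=\log(|\sum_{k\in\mathcal{S}^c}\mathbf{H}_{k,\mathcal{T}}^{H}\mathbf{B}_k\mathbf{H}_{k,\mathcal{T}}+\mathbf{K}_{\mathcal{T}}^{-1}|/|\mathbf{K}_{\mathcal{T}}^{-1}|)$. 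Inserting both expressions into \eqref{eq:MK_C_Main} and simplifying the determinants reproduces \eqref{eq:GaussSumCap}, so every tuple in the claimed region is achievable.

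For the converse I would start from an arbitrary rate tuple that is achievable under Gaussian inputs: by Theorem~\ref{th:MK_C_Main} there exist (not necessarily Gaussian) test channels $\{p(u_k|y_k,q)\}$ satisfying all the constraints \eqref{eq:MK_C_Main}. For each $k$ define the averaged conditional error covariance $\mathbf{\Delta}_k=\mathrm{E}[\mathrm{Cov}(\mathbf{Y}_k|U_k,\mathbf{X}_{\mathcal{L}},Q)]$ and the matrix $\mathbf{B}_k=\mathbf{\Sigma}_k^{-1}-\mathbf{\Sigma}_k^{-1}\mathbf{\Delta}_k\mathbf{\Sigma}_k^{-1}$. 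Because conditioning cannot increase the MMSE covariance, $\mathbf{0}\preceq\mathbf{\Delta}_k\preceq\mathbf{\Sigma}_k$, hence $\mathbf{0}\preceq\mathbf{B}_k\preceq\mathbf{\Sigma}_k^{-1}$; note that this $\mathbf{B}_k$ is fixed once the test channel is fixed and does not depend on the subsets $\mathcal{T},\mathcal{S}$. The maximum-entropy property of the Gaussian under a covariance constraint, together with the concavity of $\log|\cdot|$, gives $h(\mathbf{Y}_k|U_k,\mathbf{X}_{\mathcal{L}},Q)\leq\log|(\pi e)\mathbf{\Delta}_k|$, while $h(\mathbf{Y}_k|\mathbf{X}_{\mathcal{L}},Q)=\log|(\pi e)\mathbf{\Sigma}_k|$, so that $I(\mathbf{Y}_k;U_k|\mathbf{X}_{\mathcal{L}},Q)\geq\log(|\mathbf{\Sigma}_k|/|\mathbf{\Delta}_k|)$, which is exactly the first bracketed term of \eqref{eq:GaussSumCap} at this $\mathbf{B}_k$ after rewriting the determinant.

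The decisive step is to control the information term with the \emph{same} $\mathbf{B}_k$, that is, to prove
\begin{equation}
I(\mathbf{X}_{\mathcal{T}};U_{\mathcal{S}^c}|\mathbf{X}_{\mathcal{T}^c},Q)\leq\log\frac{\big|\sum_{k\in\mathcal{S}^c}\mathbf{H}_{k,\mathcal{T}}^{H}\mathbf{B}_k\mathbf{H}_{k,\mathcal{T}}+\mathbf{K}_{\mathcal{T}}^{-1}\big|}{|\mathbf{K}_{\mathcal{T}}^{-1}|}.
\end{equation}
This is where I expect the main obstacle to lie: the left-hand side is the \emph{aggregate} information that $U_{\mathcal{S}^c}$ conveys about $\mathbf{X}_{\mathcal{T}}$, whereas each $\mathbf{B}_k$ was extracted from the \emph{individual} conditional covariance of $\mathbf{Y}_k$, so matching them amounts to showing that Gaussian test channels jointly optimize all these quantities. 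The natural tools are the vector-Gaussian entropy-power inequality or, equivalently, a de Bruijn-identity/Fisher-information matrix argument as in the vector Gaussian CEO converse; the conditional-independence structure \eqref{eq:MKChain_pmf} -- the outputs $\mathbf{Y}_k$ are independent given $\mathbf{X}_{\mathcal{L}}$ and each $U_k$ depends on $\mathbf{Y}_k$ alone -- is what makes each $U_k$ a degraded observation of $\mathbf{X}$ and keeps these inequalities tractable. Combining this upper bound with the per-relay lower bound on the compression terms shows that the right-hand side of \eqref{eq:MK_C_Main} can never exceed that of \eqref{eq:GaussSumCap} for the so-defined $\mathbf{B}_k$; together with the direct part this establishes that the two regions coincide.
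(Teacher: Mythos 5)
Your direct part is correct and coincides with the paper's: Gaussian test channels $U_k=\mathbf{Y}_k+\hat{\mathbf{Z}}_k$ with $\mathbf{B}_k=(\mathbf{\Sigma}_k+\hat{\mathbf{\Sigma}}_k)^{-1}$ evaluate the two mutual-information terms of \eqref{eq:MK_C_Main} exactly to the expressions in \eqref{eq:GaussSumCap}. Your converse also opens exactly as the paper's does, defining $\mathbf{B}_k$ through the conditional MMSE covariance via $\mathrm{E}[\mathrm{Cov}(\mathbf{Y}_k|U_k,\mathbf{X}_{\mathcal{L}},Q)]=\mathbf{\Sigma}_k-\mathbf{\Sigma}_k\mathbf{B}_k\mathbf{\Sigma}_k$ and bounding the compression term by the maximum-entropy argument. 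But the proposal stops exactly where the real work begins: the inequality you yourself call ``the decisive step,''
\begin{equation*}
I(\mathbf{X}_{\mathcal{T}};U_{\mathcal{S}^c}|\mathbf{X}_{\mathcal{T}^c},Q)\leq\log\frac{\bigl|\sum_{k\in\mathcal{S}^c}\mathbf{H}_{k,\mathcal{T}}^{H}\mathbf{B}_k\mathbf{H}_{k,\mathcal{T}}+\mathbf{K}_{\mathcal{T}}^{-1}\bigr|}{|\mathbf{K}_{\mathcal{T}}^{-1}|},
\end{equation*}
is never proved; you only name the EPI or a de Bruijn/Fisher-information argument as ``natural tools.'' That is indeed the right family of tools (the paper's proof is a Fisher-information argument), but naming them is not a proof, and this inequality is the entire content of the converse: it is what forces the \emph{same} matrix $\mathbf{B}_k$, extracted from the per-relay MMSE, to also control the aggregate information term. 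Concretely, the paper establishes it in two steps: (i) the conditional Fisher-information inequality $h(\mathbf{X}_{\mathcal{T}}|\mathbf{X}_{\mathcal{T}^c},U_{\mathcal{S}^c},Q=q)\geq\log|(\pi e)\,\mathbf{J}^{-1}(\mathbf{X}_{\mathcal{T}}|\mathbf{X}_{\mathcal{T}^c},U_{\mathcal{S}^c},q)|$, and (ii) the exact identity $\mathbf{J}(\mathbf{X}_{\mathcal{T}}|\mathbf{X}_{\mathcal{T}^c},U_{\mathcal{S}^c},q)=\sum_{k\in\mathcal{S}^c}\mathbf{H}_{k,\mathcal{T}}^{H}\mathbf{B}_{k,q}\mathbf{H}_{k,\mathcal{T}}+\mathbf{K}_{\mathcal{T},q}^{-1}$, obtained by writing $\mathbf{X}_{\mathcal{T}}$ as its linear MMSE estimate from $(\mathbf{X}_{\mathcal{T}^c},\mathbf{Y}_{\mathcal{S}^c})$ plus an independent Gaussian error and invoking the de Bruijn/MMSE identity; the cross terms between different relays vanish precisely because of the Markov chain \eqref{eq:MKChain_pmf}, and this is what lets the per-relay matrices $\mathbf{B}_{k,q}$ reassemble into the aggregate conditional Fisher information. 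None of this computation appears in your proposal, so the outer bound is never actually matched to \eqref{eq:GaussSumCap}.

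There is a second, smaller gap concerning time-sharing. Under enabled time-sharing the Gaussian inputs have per-letter covariances $\mathbf{K}_{\mathcal{T},q}$ depending on $Q=q$, whereas \eqref{eq:GaussSumCap} involves the single matrix $\mathbf{K}_{\mathcal{T}}$; the per-$q$ Fisher bound produces $\mathbf{K}_{\mathcal{T},q}^{-1}$ and $\mathbf{K}_{\mathcal{T},q}$, and a plain Jensen step over $q$ leaves you with the arithmetic mean of the $\mathbf{K}_{\mathcal{T},q}$ in one place and the inverse of their harmonic mean in the other, neither of which is $\mathbf{K}_{\mathcal{T}}$. The paper needs additional matrix-mean machinery (the arithmetic--harmonic mean inequality for positive-definite matrices, together with a determinant monotonicity lemma stating that $\mathbf{B}\succeq\mathbf{A}$ implies $|\mathbf{I}+\mathbf{B}\mathbf{C}|\geq|\mathbf{I}+\mathbf{A}\mathbf{C}|$) to collapse everything to the single constraint matrix $\mathbf{K}_{\mathcal{T}}$ and thereby conclude that time-sharing is not needed. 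Your proposal, which averages over $Q$ from the outset, never confronts this reconciliation.
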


\vspace{0.1cm}

\begin{remark}
Theorem~\ref{th:GaussSumCap} extends the result of~\cite[Theorem 5]{Sanderovich2008:IT:ComViaDesc} to the case of $L$ users and enabled time-sharing. In addition to showing that under the constraint of Gaussian input signaling, the quantization codewords can be chosen optimally to be Gaussian, the result of Theorem~\ref{th:GaussSumCap} also means that time-sharing is not needed in the memoryless Gaussian case. Recall that, as shown through an example in~\cite{Sanderovich2008:IT:ComViaDesc}, if the relays are aware of the users' codebooks restricting to Gaussian input signaling can be a severe constraint and is generally suboptimal. 
\end{remark}

\begin{remark}
In~\cite{DBLP:journals/corr/ZhouX0C16}, the authors study the questions of optimal fronthaul compression and decoding strategies for uplink CRAN networks without oblivious processing constraints. It is shown that NNC with Gaussian input and Gaussian quantization achieve to within a constant gap of the capacity region of the Gaussian MIMO uplink CRAN. In this paper, we show that if only oblivious relay processing is allowed, NNC and CoF-JD is in fact optimal from a capacity viewpoint. 
\end{remark}
\vspace{-3mm}

\section{General Discrete Memoryless Model}\label{sec:GenOCRAN}

In this section, we focus on general discrete memoryless CRAN channels with oblivious relay processing and time sharing, i.e., the channel outputs at the relays are arbitrarily correlated and the Markov chain~\eqref{eq:MKChain_pmf} does not necessarily hold. We establish bounds on the capacity region of the model. The results extend those of~\cite{Sanderovich2008:IT:ComViaDesc}, which only consider a single transmitter and no time-sharing, to the case of multiple transmitters and allowed time-sharing.  

\noindent The following theorem provides an inner bound on the capacity region of the general DM CRAN model with oblivious relay processing and time sharing.

\begin{theorem}\label{th:NNC_all_MK_inner}
For general DM CRAN channels with oblivious relay processing and enabled time-sharing, the set of rates $(R_1,\ldots,R_L)$ such that for all $\mathcal{T} \subseteq \mathcal{L}$ and all $ \mathcal{S} \subseteq \mathcal{K}$,  
\begin{align}\label{eq:NNC_all_MK_inner}
\sum_{t\in \mathcal{T}}R_t\leq& \sum_{s\in \mathcal{S}} C_s-I(Y_{S};U_{\mathcal{S}}|X_{\mathcal{L}},U_{\mathcal{S}^c},Q) \\ 
&+ I(X_{\mathcal{T}};U_{\mathcal{S}^c}|X_{\mathcal{T}^c},Q),
\end{align}
for some joint measure $p(q)\prod_{l=1}^{L}p(x_l|q) p(y_{\mathcal{K}}|x_{\mathcal{L}})$ $\cdot\prod_{k=1}^{K}p(u_k|y_k,q)$, is achievable.
\end{theorem}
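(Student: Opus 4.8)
The plan is to prove achievability by specializing the noisy network coding (NNC) scheme of~\cite{Lim:IT:2011:NoisyNetwork} to the present CRAN, in which the relay-to-CP links are orthogonal and noiseless. The point that makes NNC admissible under the obliviousness constraint is structural: NNC uses codebooks with conditionally i.i.d.\ entries and relay mappings that depend only on the received sequence and the shared time-sharing sequence, never on the users' codebooks. By Lemma~\ref{lem:IIDinput}, this is precisely the input statistics that the randomized-codebook (oblivious) model induces, so no generality is lost. Concretely, I would fix a joint pmf of the stated product form, generate $q^n$ i.i.d.\ and reveal it to all nodes, generate for each user $l$ a codebook $\{x_l^n(m_l)\}$ with entries drawn from $p(x_l|q_i)$, and generate for each relay $k$ an \emph{independent} quantization codebook $\{u_k^n(l_k)\}$, $l_k\in[1,2^{n\hat{R}_k}]$, with entries drawn from $p(u_k|q_i)$. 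Encoding is oblivious: user $l$ sends $x_l^n(m_l)$, and relay $k$ selects an index $l_k$ with $(y_k^n,u_k^n(l_k),q^n)$ jointly typical---feasible for $\hat{R}_k>I(Y_k;U_k|Q)$ by the covering lemma---then forwards over its $C_k$-link a bin index identifying $l_k$.

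Decoding is performed jointly at the CP. Having received $(j_1,\ldots,j_K)$ and knowing $(F_{\mathcal L},q^n)$, the CP searches for a message tuple $\hat{m}_{\mathcal L}$ and compression indices $\hat{l}_{\mathcal K}$ consistent with the received bins such that $(x_{\mathcal L}^n(\hat{m}_{\mathcal L}),u_{\mathcal K}^n(\hat{l}_{\mathcal K}),q^n)$ is jointly typical, and outputs $\hat{m}_{\mathcal L}$. The heart of the proof is the ensuing error analysis, which I would organize by the subset $\mathcal T\subseteq\mathcal L$ of messages decoded in error together with the subset $\mathcal S\subseteq\mathcal K$ of compression indices that are simultaneously in error. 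Counting the $2^{n\sum_{t\in\mathcal T}R_t}$ competing message tuples and the $2^{n\sum_{s\in\mathcal S}(\hat{R}_s-C_s)}$ competing in-bin quantization codewords, and evaluating the probability that the independently drawn $(X_{\mathcal T},U_{\mathcal S})$ are jointly typical with the correct $(X_{\mathcal T^c},U_{\mathcal S^c})$ under the true joint law, the requirement for vanishing error reduces to $\sum_{t\in\mathcal T}R_t+\sum_{s\in\mathcal S}(\hat{R}_s-C_s)<I(X_{\mathcal T};U_{\mathcal S^c}|X_{\mathcal T^c},Q)+\sum_{s\in\mathcal S}H(U_s|Q)-H(U_{\mathcal S}|X_{\mathcal L},U_{\mathcal S^c},Q)$.

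Setting the quantization rates to the covering minimum $\hat{R}_s=I(Y_s;U_s|Q)$ and using that the $U_s$ are conditionally independent given $(Y_{\mathcal S},Q)$, the overhead terms telescope: $\sum_{s\in\mathcal S}H(U_s|Q)-\sum_{s\in\mathcal S}\hat{R}_s=\sum_{s\in\mathcal S}H(U_s|Y_s,Q)=H(U_{\mathcal S}|Y_{\mathcal S},Q)$, and combining with $-H(U_{\mathcal S}|X_{\mathcal L},U_{\mathcal S^c},Q)$ collapses the quantization cost into the single joint term $-I(Y_{\mathcal S};U_{\mathcal S}|X_{\mathcal L},U_{\mathcal S^c},Q)$, yielding exactly~\eqref{eq:NNC_all_MK_inner}. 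The main obstacle, and what distinguishes this general case from the conditionally independent class of Theorem~\ref{th:MK_C_Main}, is precisely this collapse: because the relay outputs may be arbitrarily correlated, the cost cannot be split into a sum $\sum_{s}I(Y_s;U_s|X_{\mathcal L},Q)$ but must be retained as the joint, $U_{\mathcal S^c}$-conditioned mutual information, which forces one to track the per-relay covering rates against a typicality test defined by the true correlated distribution, and to handle carefully those relays with $\hat{R}_k\le C_k$ (which never enter the error set $\mathcal S$ and hence only enlarge the achievable region). A secondary point requiring care is that all covering and joint-typicality statements be carried out conditionally on the memoryless time-sharing sequence $Q^n$, so that the single-letter region with conditioning on $Q$ is recovered after the usual $\delta\to 0$ limiting argument.
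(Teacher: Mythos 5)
Your proposal is correct and follows essentially the same route as the paper's proof: despite your framing it as a specialization of noisy network coding, the scheme you actually construct (per-relay quantization codebooks with covering rate $\hat{R}_k > I(Y_k;U_k|Q)$, random binning to the fronthaul rate $C_k$, joint typicality decoding of messages and compression indices at the CP, and an error analysis organized by the error subsets $\mathcal{T}\subseteq\mathcal{L}$, $\mathcal{S}\subseteq\mathcal{K}$) is exactly the compress-and-forward-with-joint-decoding construction the paper uses, and your final entropy collapse via the Markov chain $U_{\mathcal{S}} \mkv (Y_{\mathcal{S}},Q) \mkv (X_{\mathcal{L}},U_{\mathcal{S}^c})$ matches the paper's simplification step. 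The only cosmetic difference is that you carry the time-sharing variable $Q$ through the analysis, whereas the paper sets $Q=\emptyset$ for notational simplicity and invokes coded time-sharing at the end.
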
 

\noindent We now provide an outer bound on the capacity region of the general DM CRAN model with oblivious relay processing and time-sharing. The following theorem states the result.

\begin{theorem}\label{th:NNC_all_MK_outer}
For general DM CRAN channels with oblivious relay processing and enabled time-sharing, 
if a rate-tuple $(R_1,\ldots,R_L)$ is achievable then for all $\mathcal{T} \subseteq \mathcal{L}$ and all $ \mathcal{S} \subseteq \mathcal{K}$, 
\begin{align}\label{eq:NNC_all_MK_outer}
\sum_{t\in \mathcal{T}}R_t\leq &\sum_{s\in \mathcal{S}} C_s-I(Y_{S};U_{\mathcal{S}}|X_{\mathcal{L}},U_{\mathcal{S}^c},Q) 
\\& + I(X_{\mathcal{T}};U_{\mathcal{S}^c}|X_{\mathcal{T}^c},Q),
\end{align}
for some $(Q,X_{\mathcal{L}},Y_{\mathcal{K}},U_{\mathcal{K}},W)$ distributed according to
$p(q)\prod_{l=1}^{L}p(x_l|q) ~ p(y_{\mathcal{K}}|x_{\mathcal{L}})~p(w|q)$, $u_{k} = f_k(w,y_k,q)$ for $k=[1, K]$, 
for some random variable $W$ and some deterministic functions $\{f_{k}\}$, $k\in \mc K$.
\end{theorem}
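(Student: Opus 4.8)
The plan is to establish the outer bound by combining a standard cut-set-type manipulation for the message rates with the key structural feature of oblivious relaying: because the relays do not know the codebook indices $F_{\mathcal{L}}$, the randomized encoding makes the channel inputs conditionally i.i.d.\ given $Q^n$, as formalized in Lemma~\ref{lem:IIDinput}. First I would introduce the auxiliary random variable $W$ to carry the codebook indices: set $W = (F_{\mathcal{L}}, Q^{i-1}, \ldots)$ or, more precisely, let $W$ encode the information about the random codebooks that is available at the CP but \emph{not} at the relays. The obliviousness constraint is exactly the statement that, conditioned on $Q$, the relay descriptions depend on $(W, Y_k)$ only through a deterministic map, which is why the theorem posits $u_k = f_k(w, y_k, q)$ and the Markov structure $p(w|q)$ with $X_{\mathcal{L}}$ conditionally independent of $W$ given $Q$.

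Next I would carry out the rate bounding. Starting from Fano's inequality applied to the messages $M_{\mathcal{T}}$ decoded at the CP, I would write $n\sum_{t\in\mathcal{T}}R_t \le I(M_{\mathcal{T}}; J_{\mathcal{K}}, F_{\mathcal{L}} \mid M_{\mathcal{T}^c}) + n\epsilon_n$ and split this single-letterization into two contributions corresponding to the two terms on the right-hand side of~\eqref{eq:NNC_all_MK_outer}. The descriptions $J_s$ for $s\in\mathcal{S}$ contribute at most their link capacities $C_s$, but each is penalized by the compression cost $I(Y_S; U_{\mathcal{S}} \mid X_{\mathcal{L}}, U_{\mathcal{S}^c}, Q)$, which I would extract by identifying $U_k$ with the relay description (together with appropriate past/future side information made into a time-sharing-like variable). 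The descriptions $J_{\mathcal{S}^c}$ for the complementary relays, together with the codebook indices, contribute the ``useful'' mutual information term $I(X_{\mathcal{T}}; U_{\mathcal{S}^c} \mid X_{\mathcal{T}^c}, Q)$. The single-letterization of both pieces proceeds by the standard time-sharing identification of $Q$ with a uniform time index appended to $Q^n$, after which the per-letter inequalities hold with the claimed joint distribution.

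The main obstacle, and the step requiring the most care, is verifying that the auxiliary $W$ and the deterministic relay maps $u_k = f_k(w, y_k, q)$ can be chosen \emph{consistently across all subsets} $\mathcal{S}$ and $\mathcal{T}$ simultaneously, while respecting the Markov constraint $X_{\mathcal{L}} - Q - W$ that encodes obliviousness. The delicate point is that $W$ must capture the codebook randomness in a way that (i) renders the inputs $X_{\mathcal{L}}$ conditionally i.i.d.\ given $Q$ alone (so that Lemma~\ref{lem:IIDinput} applies and $X_{\mathcal{L}}$ is independent of $W$ given $Q$), yet (ii) is rich enough that conditioning on it recovers the genuine dependence between the transmitted codewords and the CP's knowledge. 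I expect to define $W$ as the collection of codebook indices $F_{\mathcal{L}}$ and then verify that $I(Y_{k}; J_k \mid \cdots)$ terms can be upper-bounded using the deterministic-function representation; the relay map being a function of $(w, y_k, q)$ only—and crucially not of the true messages—is what encodes the obliviousness and must be checked to survive the single-letterization.

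Finally, I would confirm that the resulting single-letter region matches~\eqref{eq:NNC_all_MK_outer} by checking the two mutual-information terms reduce correctly under the identification, and that the claimed joint measure $p(q)\prod_l p(x_l|q)\, p(y_{\mathcal{K}}|x_{\mathcal{L}})\, p(w|q)$ with $u_k = f_k(w,y_k,q)$ is exactly the distribution produced by the construction. The fact that this outer bound coincides in form with the inner bound of Theorem~\ref{th:NNC_all_MK_inner}—differing only in that the inner bound uses general conditional kernels $p(u_k|y_k,q)$ while the outer bound restricts to deterministic functions of an auxiliary $W$—is the structural signature that the obliviousness constraint is being used in full, and I would emphasize this correspondence as the conceptual payoff of the proof.
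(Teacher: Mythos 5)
Your high-level skeleton (Fano's inequality plus a fronthaul bound, then single-letterization with $Q$ taken as a uniform time index $Q'$ appended to $\bar{Q}_{Q'}=(X_{\mathcal{L}}^{Q'-1},X_{\mathcal{L},Q'+1}^n,\tilde Q)$) matches the paper's proof, but your identification of the auxiliary $W$ --- the one step you yourself flag as the delicate one --- is wrong, and it is exactly the step that carries the proof. You propose $W=F_{\mathcal{L}}$ (the codebook indices). This fails both properties the theorem demands of $W$. First, the representation $U_k=f_k(W,Y_k,Q)$ cannot hold: the relay message $J_k=\phi_k^r(Y_k^n,q^n)$ depends on the \emph{entire} received block $Y_k^n$, and knowing $F_{\mathcal{L}}$, the single letter $Y_{k,i}$ and $q^n$ does not determine $Y_k^n$ (the channel is noisy), hence does not determine $J_k$. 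Second, the required factorization $p(q)\prod_{l}p(x_l|q)\,p(y_{\mathcal{K}}|x_{\mathcal{L}})\,p(w|q)$ forces $W$ to be independent of $X_{\mathcal{L}}$ given $Q$; but $F_{\mathcal{L}}$ determines the codewords, so it is strongly correlated with $X_{\mathcal{L},i}$ --- indeed, given $F_{\mathcal{L}}$, $q^n$ and the past/future inputs already contained in $\bar Q_{Q'}$, the current symbol $X_{\mathcal{L},Q'}$ is essentially pinned down by the message. Avoiding any conditioning on $F_{\mathcal{L}}$ is precisely what the converse must do.

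The paper's choice is, in a sense, the opposite of yours: it enlarges the auxiliaries to $U_{k,i}\triangleq (J_k,Y_{\mathcal{K}}^{i-1})$ (the past outputs of \emph{all} relays, unlike the converse of Theorem~\ref{th:MK_C_Main}, which uses only $Y_k^{i-1}$) and sets $W\triangleq (Y_{\mathcal{K}}^{Q'-1},Y_{\mathcal{K},Q'+1}^{n})$, the past and future channel outputs at all relays. With this choice $U_k=f_k(W,Y_k,Q)$ holds trivially, and for \emph{any} relay mapping, oblivious or not, since $Y_k^n$ can be reassembled from $W$, $Y_{k,Q'}$ and $Q'$, while $q^n$ is available through $Q$. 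Where obliviousness actually enters is the independence requirement: by Lemma~\ref{lem:IIDinput}, when one does not condition on $F_{\mathcal{L}}$ the channel inputs are i.i.d.\ across time, so by memorylessness of the channel the pair $(X_{\mathcal{L},Q'},Y_{\mathcal{K},Q'})$ is independent of the past/future outputs $W$ given $Q$, which is what yields the factor $p(w|q)$ with $X_{\mathcal{L}}$ independent of $W$ given $Q$. So your claim that ``obliviousness is exactly the statement that the relay descriptions depend on $(W,Y_k)$ only through a deterministic map'' inverts the logic: the deterministic-map structure comes for free, and what obliviousness buys is the independence of $W$ from the time-$Q'$ variables. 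As written, your construction cannot be completed; replacing $W=F_{\mathcal{L}}$ by the past/future outputs (and enlarging $U_{k,i}$ accordingly) is not a cosmetic fix but the essential idea of the proof.
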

%\noindent\textbf{Proof:} The proof of Theorem~\ref{th:NNC_all_MK_outer} appears in Appendix~\ref{app:NNC_all_MK_outer}. \qed
\begin{remark}
The inner bound of Theorem~\ref{th:NNC_all_MK_inner} and the outer bound of Theorem \ref{th:NNC_all_MK_outer} do not coincide in general. This is due to the fact that in Theorem \ref{th:NNC_all_MK_inner}, $U_1,\ldots, U_K$ satisfy the Markov chain $U_k \mkv (Y_k,Q) \mkv (X_{\mathcal{L}}, Y_{\mathcal{L}/k},U_{\mathcal{K}/k})$,  whereas in Theorem \ref{th:NNC_all_MK_outer} each $U_k$ is a function of $Y_k$ but also of a ``common'' random variable $W$. In particular,  the Markov chain $U_k \mkv (Y_k,Q) \mkv U_{\mathcal{K}/k}$ does not necessarily hold for the auxiliary random variables of the outer bound.
\end{remark}

\begin{remark}
As we already mentioned, the class~\eqref{eq:MKChain_pmf} of DM CRAN channels connects with the CEO problem under logarithmic loss distortion measure. The rate-distortion region of this problem is characterized in the excellent contribution~\cite{Courtade2014LogLoss} for an arbitrary number of (source) encoders (see Theorem 3 therein). For general DM CRAN channels, i.e., \textit{without} the Markov chain~\eqref{eq:MKChain_pmf} the model connects with the distributed source coding problem under logarithmic loss distortion measure. While a solution of the latter problem for the case of two encoders has been found in~\cite[Theorem 6]{Courtade2014LogLoss}, generalizing the result to the case of arbitrary number of encoders poses a significant challenge. In fact, as also mentioned in~\cite{Courtade2014LogLoss},  the Berger-Tung inner bound is known to be generally suboptimal (e.g., see the Korner-Marton lossless modulo-sum problem~\cite{Korner:IT:1979HowToEncode}). Characterizing the capacity region of the general DM CRAN model under the constraint of oblivious relay processing and enabled time-sharing poses a similar challenge, except perhaps for the case of two relay nodes, results on which will be reported elsewhere. 
\end{remark}

\section{Proof of Converse Part of Theorem~\ref{th:MK_C_Main}}\label{sec:Converse}

Assume the rate tuple $(R_{1},\ldots,R_L)$ is achievable. Let $\mathcal{T}$ be a set of $\mathcal{L}$, $\mathcal{S}$ be a non-empty set of $\mathcal{K}$, and $J_k\triangleq \phi^r_k(Y_k^n,q^n)$ be the message sent by relay $k\in \mathcal{K}$, and let $\tilde{Q}=q^n$ be the time-sharing variable. For simplicity we define
$X_{\mathcal{L}}^n \triangleq (X_{1}^n,\ldots,X_{L}^n )$, $R_{\mathcal{T}}\triangleq \sum_{t\in \mathcal{T}}R_t$ and $C_{\mathcal{S}}\triangleq \sum_{k\in \mathcal{S}}C_k$.

\noindent Define 
%\begin{align}
$U_{i,k}\triangleq (J_k,Y_k^{i-1}) \quad \text{and }\quad \bar{Q}_i\triangleq (X_{\mathcal{L}}^{i-1},X_{\mathcal{L},i+1}^n,\tilde{Q})$. 
%\end{align}

From Fano's inequality, we have with $\epsilon_n\rightarrow 0$ for $n\rightarrow \infty$ (for vanishing probability of error), for all $\mathcal{T}\subseteq \mathcal{L}$,
\begin{align}
H(m_{\mathcal{T}}|J_{\mathcal{K}},F_{\mathcal{L}},\tilde{Q}) \leq H(m_{\mathcal{L}}|J_{\mathcal{K}},F_{\mathcal{L}},\tilde{Q})\leq n \epsilon_n.\label{eq:MK_C_Fano}
\end{align}
We show the following inequality, used below in the proof.
\begin{align}
H(X^n_{\mathcal{T}}|X^n_{\mathcal{T}^c},J_{\mathcal{K}},\tilde{Q})&\leq \sum_{i=1}^nH(X_{\mathcal{T},i}|X_{\mathcal{T}^c,i},\bar{Q}_i) -nR_{\mathcal{ T}}\\ 
&  \triangleq n\Gamma_{\mathcal{T}}  .\label{eq:MK_C_BoundGamma}
\end{align}
Inequality \eqref{eq:MK_C_BoundGamma} can be shown as follows. 
\begin{align}
%nR_{\mathcal{T}}
n&R_{\mathcal{T}} = H(m_{\mathcal{T}})\label{eq:MK_C_Ineq_00}\\
=& I(m_{\mathcal{T}};J_{\mathcal{K}},F_{\mathcal{L}},\tilde{Q})+ H(m_{\mathcal{T}}|J_{\mathcal{K}},F_{\mathcal{L}},\tilde{Q})\\
=& I(m_{\mathcal{T}};J_{\mathcal{K}},F_{\mathcal{T}}|F_{\mathcal{T}^c},\tilde{Q})+ H(m_{\mathcal{T}}|J_{\mathcal{K}},F_{\mathcal{L}},\tilde{Q})\label{eq:MK_C_Ineq_0}\\
\leq& I(m_{\mathcal{T}};J_{\mathcal{K}},F_{\mathcal{T}}|F_{\mathcal{T}^c},\tilde{Q})+ n\epsilon_n\label{eq:MK_C_Fano_sub}\\
=& H(J_{\mathcal{K}},F_{\mathcal{T}}|F_{\mathcal{T}^c},\tilde{Q}) - H(J_{\mathcal{K}},F_{\mathcal{T}}|F_{\mathcal{T}^c},m_{\mathcal{T}},\tilde{Q})+ n\epsilon_n\\
=& H(J_{\mathcal{K}}|F_{\mathcal{T}^c},\tilde{Q})+H(F_{\mathcal{T}}|F_{\mathcal{T}^c},J_{\mathcal{K}},\tilde{Q}) + n\epsilon_n\\
&- H(F_{\mathcal{T}}|F_{\mathcal{T}^c},m_{\mathcal{T}},\tilde{Q}) - H(J_{\mathcal{K}}|F_{\mathcal{T}^c},m_{\mathcal{T}},F_{\mathcal{T}},\tilde{Q})\\
 =& I(m_{\mathcal{T}},F_{\mathcal{T}};J_{\mathcal{K}}|F_{\mathcal{T}^c},\tilde{Q})-I(F_{\mathcal{T}};J_{\mathcal{K}}|F_{\mathcal{T}^c},\tilde{Q})+ n\epsilon_n \label{eq:MK_C_indep}\\\
\leq &  I(m_{\mathcal{T}},F_{\mathcal{T}};J_{\mathcal{K}}|F_{\mathcal{T}^c},\tilde{Q})+ n\epsilon_n\\
\leq &  I(X_{\mathcal{T}}^n;J_{\mathcal{K}}|F_{\mathcal{T}^c},\tilde{Q})+ n\epsilon_n\label{eq:MK_C_dataproc}\\
 = & H(X_{\mathcal{T}}^n|F_{\mathcal{T}^c},\tilde{Q})-H(X_{\mathcal{T}}^n|F_{\mathcal{T}^c},J_{\mathcal{K}},\tilde{Q})+ n\epsilon_n\label{eq:MK_C_Ineq_10}\\
 % = & H(X_{\mathcal{T}}^n|X_{\mathcal{T}^c}^n,F_{\mathcal{T}^c},\tilde{Q})-H(X_{\mathcal{T}}^n|F_{\mathcal{T}^c},J_{\mathcal{K}},\tilde{Q})+ n\epsilon_n
  \leq & H(X_{\mathcal{T}}^n|X_{\mathcal{T}^c}^n,\tilde{Q})-H(X_{\mathcal{T}}^n|X_{\mathcal{T}^c}^n,F_{\mathcal{T}^c},J_{\mathcal{K}},\tilde{Q})+ n\epsilon_n\label{eq:MK_C_Ineq_11}\\
  = & H(X_{\mathcal{T}}^n|X_{\mathcal{T}^c}^n,\tilde{Q})-H(X_{\mathcal{T}}^n|X_{\mathcal{T}^c}^n,J_{\mathcal{K}},\tilde{Q})+ n\epsilon_n,\label{eq:MK_C_Ineq_12}
\end{align}
where \eqref{eq:MK_C_Ineq_00} follows since $m_{\mathcal{T}}$ are independent; \eqref{eq:MK_C_Ineq_0} follows since $m_{\mathcal{T}}$ is independent of $\tilde{Q}$ and $F_{\mathcal{T}^c}$; \eqref{eq:MK_C_Fano_sub} follows from \eqref{eq:MK_C_Fano}; \eqref{eq:MK_C_indep} follows since $m_{\mathcal{T}}$ is independent of $F_{\mathcal{L}}$; \eqref{eq:MK_C_dataproc} follows from the data processing inequality;\eqref{eq:MK_C_Ineq_11} follows since $X_{\mathcal{T}^c}^n,F_{\mathcal{T}^c}$ are independent from $X_{\mathcal{T}}^n$ and since conditioning reduces entropy and; \eqref{eq:MK_C_Ineq_12} follows  due to the Markov chain 
\begin{align}
X_{\mathcal{T}}^n \mkv (X_{\mathcal{T}^c}^n,J_{\mathcal{K}},\tilde{Q}) \mkv F_{\mathcal{T}^c}.
\end{align}
Then, from \eqref{eq:MK_C_Ineq_12} we have \eqref{eq:MK_C_BoundGamma} as follows: 
\begin{align}
&H(X_{\mathcal{T}}^n|X_{\mathcal{T}^c}^n,J_{\mathcal{K}},\tilde{Q})\\
%&\leq  H(X_{\mathcal{T}}^n|X_{\mathcal{T}^c}^n,\tilde{Q}) -nR_{\mathcal{T}}- n\epsilon_n\\
&\leq \sum_{i=1}^{n}H(X_{\mathcal{T},i}|X_{\mathcal{T}^c}^n, X_{\mathcal{T}}^{i-1},\tilde{Q}) - nR_{\mathcal{T}}\label{eq:MK_C_Ineq_0_1}\\
&= \sum_{i=1}^{n}H(X_{\mathcal{T},i}|X_{\mathcal{T}^c,i},X_{\mathcal{L}}^{i-1},X_{\mathcal{L},i+1}^n,\tilde{Q}) - nR_{\mathcal{T}}\label{eq:MK_C_Ineq_0_2}\\
&= \sum_{i=1}^{n} H(X_{\mathcal{T},i}|X_{\mathcal{T}^c,i},\bar{Q}_i) -nR_{\mathcal{T}} = n\Gamma_{\mathcal{T}}.
\end{align}
where \eqref{eq:MK_C_Ineq_0_2} is due to Lemma \ref{lem:IIDinput}.

Continuing from \eqref{eq:MK_C_Ineq_12}, we have
\begin{align}
 nR_{\mathcal{T}} \leq &
 %  H(X_{\mathcal{T}}^n|X_{\mathcal{T}^c}^n,\tilde{Q})-H(X_{\mathcal{T}}^n|X_{\mathcal{T}^c}^n, J_{\mathcal{K}},\tilde{Q})+ n\epsilon_n\\
%
%=
\sum_{i=1}^n H(X_{\mathcal{T},i}|X_{\mathcal{T}^c}^n,\tilde{Q}, X_{\mathcal{T}}^{i-1})
\nonumber\\
&
-H(X_{\mathcal{T},i}^n|X_{\mathcal{T}^c}^n, J_{\mathcal{K}},X_{\mathcal{T}}^{i-1},\tilde{Q})+ n\epsilon_n\\
=& \sum_{i=1}^n H(X_{\mathcal{T},i}|X_{\mathcal{T}^c}^n, \tilde{Q}, X_{\mathcal{T}}^{i-1}, X_{\mathcal{T},i+1}^{n})\nonumber\\
&-H(X_{\mathcal{T},i}|X_{\mathcal{T}^c}^n, J_{\mathcal{K}},X_{\mathcal{T}}^{i-1},\tilde{Q})+ n\epsilon_n\label{eq:MK_C_ineq_2}\\
\leq&
%\sum_{i=1}^n H(X_{\mathcal{T},i}|X_{\mathcal{T}^c,i},\bar{Q}_i)\label{eq:MK_C_ineq_3}\nonumber\\
%&-H(X_{\mathcal{T},i}|X_{\mathcal{T}^c,i},J_{\mathcal{K}}, Y_{\mathcal{K}}^{i-1},X_{\mathcal{L}}^{i-1},X_{\mathcal{L},i+1}^{n},\tilde{Q})+ n\epsilon_n\\
%
%=&
 \sum_{i=1}^n H(X_{\mathcal{T},i}|X_{\mathcal{T}^c,i},\bar{Q}_i)\nonumber\\
&-H(X_{\mathcal{T},i}|X_{\mathcal{T}^c,i}, U_{\mathcal{K},i},\bar{Q}_{i})+ n\epsilon_n\label{eq:MK_C_ineq_3}\\
=& \sum_{i=1}^n I(X_{\mathcal{T},i};U_{\mathcal{K},i}|X_{\mathcal{T}^c,i},\bar{Q}_{i})+ n\epsilon_n\label{eq:MK_C_ineq_4},
\end{align}
where \eqref{eq:MK_C_ineq_2} follows due to Lemma \ref{lem:IIDinput};  and \eqref{eq:MK_C_ineq_3} follows since conditioning reduces entropy.

On the other hand, we have the following equality
\begin{align}
&I(Y_{\mathcal{S}}^n;J_{\mathcal{S}}|X_{\mathcal{L}}^n, J_{\mathcal{S}^c},\tilde{Q})= \sum_{k\in\mathcal{S}}I(Y_k^n;J_{k}|X_{\mathcal{L}}^n,\tilde{Q})\label{eq:MK_C_Ineq_2_2}\\
&=\sum_{k\in\mathcal{S}}\sum_{i=1}^nI(Y_{k,i};J_k|X_{\mathcal{L}}^n,Y_{k}^{i-1},\tilde{Q})\\
&=\sum_{k\in\mathcal{S}}\sum_{i=1}^nI(Y_{k,i};J_k,Y_{k}^{i-1}|X_{\mathcal{L}}^n,\tilde{Q})\label{eq:MK_C_Ineq_2_4}\\
&=  \sum_{k\in\mathcal{S}}\sum_{i=1}^nI(Y_{k,i};U_{k,i}|X_{\mathcal{L},i},\bar{Q}_i), \label{eq:MK_C_Ineq_2_41}
\end{align}
where \eqref{eq:MK_C_Ineq_2_2} follows due to the Markov chain
\begin{align}
J_k \mkv Y_k^n \mkv X_{\mathcal{L}}^n \mkv Y_{\mathcal{S}\setminus k}^n \mkv J_{\mathcal{S}\setminus k}\quad \text{for } k=[1, K],
\end{align} 
 and since $J_k$ is a function of  of $Y_k^n$; and \eqref{eq:MK_C_Ineq_2_4} follows due to the Markov chain $Y_{k,i}-X_{\mathcal{L}}^n-Y_{k}^{i-1}$ which follows since the channel is memoryless.

Then, from the relay side we have, 
\begin{align}
%n\sum_{k\in \mathcal{S}}C_k
nC_{\mathcal{S}}\geq& \sum_{k\in \mathcal{S}}H(J_k)
\geq  H(J_{\mathcal{S}})\\
\geq & H(J_{\mathcal{S}}|X_{\mathcal{T}^c}^n,J_{\mathcal{S}^c},\tilde{Q})\\
\geq& I(Y_{\mathcal{S}}^n;J_{\mathcal{S}}|X_{\mathcal{T}^c}^n,J_{\mathcal{S}^c},\tilde{Q})\\
=& I(X_{\mathcal{T}}^n,Y_{\mathcal{S}}^n;J_{\mathcal{S}}|X_{\mathcal{T}^c}^n,J_{\mathcal{S}^c},\tilde{Q})\label{eq:MK_C_Ineq_2_1}\\
%=& I(X_{\mathcal{T}}^n;J_{\mathcal{S}}|X_{\mathcal{T}^c}^n,J_{\mathcal{S}^c},\tilde{Q})+I(Y_{\mathcal{S}}^n;J_{\mathcal{S}}|X_{\mathcal{L}}^n, J_{\mathcal{S}^c},\tilde{Q})\\
%
=& H(X_{\mathcal{T}}^n|X_{\mathcal{T}^c}^n,J_{\mathcal{S}^c},\tilde{Q})-H(X_{\mathcal{T}}^n|X_{\mathcal{T}^c}^n, J_{\mathcal{K}},\tilde{Q})\nonumber\\
&+ I(Y_{\mathcal{S}}^n;J_{\mathcal{S}}|X_{\mathcal{L}}^n, J_{\mathcal{S}^c},\tilde{Q})\\
\geq& H(X_{\mathcal{T}}^n|X_{\mathcal{T}^c}^n,J_{\mathcal{S}^c},\tilde{Q})-n\Gamma_{\mathcal{T}}\nonumber\\
&+ I(Y_{\mathcal{S}}^n;J_{\mathcal{S}}|X_{\mathcal{L}}^n, J_{\mathcal{S}^c},\tilde{Q})\label{eq:MK_C_Ineq_2_3}\\
=& \sum_{i=1}^nH(X_{\mathcal{T},i}|X_{\mathcal{T}^c}^n,J_{\mathcal{S}^c},X_{\mathcal{T}}^{i-1},\tilde{Q})-n\Gamma_{\mathcal{T}}\nonumber\\
&+I(Y_{\mathcal{S}}^n;J_{\mathcal{S}}|X_{\mathcal{L}}^n, J_{\mathcal{S}^c},\tilde{Q})\\
\geq & \sum_{i=1}^nH(X_{\mathcal{T},i}|X_{\mathcal{T}^c,i},U_{\mathcal{S}^c,i},\bar{Q}_i)-n\Gamma_{\mathcal{T}}\nonumber\\
&+ I(Y_{\mathcal{S}}^n;J_{\mathcal{S}}|X_{\mathcal{L}}^n, J_{\mathcal{S}^c},\tilde{Q})\label{eq:MK_C_Ineq_2_5}\\
%
%=&\sum_{i=1}^nH(X_{\mathcal{T},i}| X_{\mathcal{T}^c,i},U_{\mathcal{S}^c,i},\bar{Q}_i)- H(X_{\mathcal{T},i}|X_{\mathcal{T}^c,i},\bar{Q}_i)\nonumber\\
%& +nR_{\mathcal{T}}  + \sum_{k\in\mathcal{S}}\sum_{i=1}^nI(Y_{k,i};U_{k,i}|X_{\mathcal{L},i},\bar{Q}_i)\label{eq:MK_C_Ineq_3_0}\\
%
=&nR_{\mathcal{T}} - \sum_{i=1}^nI(X_{\mathcal{T},i};U_{\mathcal{S}^c,i}|X_{\mathcal{T}^c,i},\bar{Q}_i)\nonumber\\ 
&+\sum_{k\in\mathcal{S}}\sum_{i=1}^nI(Y_{k,i};U_{k,i}|X_{\mathcal{L},i},\bar{Q}_i),\label{eq:MK_C_Ineq_3_1}
\end{align}
where~\eqref{eq:MK_C_Ineq_2_1} follows since $J_{\mathcal{S}}$ is a function of $Y_{\mathcal{S}}^n$;  \eqref{eq:MK_C_Ineq_2_3} follows from \eqref{eq:MK_C_BoundGamma};  \eqref{eq:MK_C_Ineq_2_5} follows since conditioning reduces entropy; and \eqref{eq:MK_C_Ineq_3_1} follows from \eqref{eq:MK_C_BoundGamma} and \eqref{eq:MK_C_Ineq_2_41}.

In general,  $\bar{Q}_i$ is not independent of $X_{\mathcal{L},i},Y_{\mathcal{S},i}$, and that due to Lemma \ref{lem:IIDinput}, conditioned on $\bar{Q}_i$, we have the Markov chain 
\begin{align}
U_{k,i}-Y_{k,i}-X_{\mathcal{L},i}-Y_{{\mathcal{K}\setminus k},i}-U_{{\mathcal{K}\setminus k},i}. 
\end{align}

\iffalse
Finally, we define the standard time-sharing variable $Q'$  uniformly distributed over $\{1,\ldots, n\}$, $X_{\mathcal{L}} \triangleq X_{\mathcal{L},Q'}$, $Y_k\triangleq Y_{k,Q'}$, $U_k \triangleq U_{k,Q'}$ and $Q \triangleq [\bar{Q}_{Q'},Q']$ and 
we have from \eqref{eq:MK_C_ineq_4} 
\begin{align}
nR_{\mathcal{T}}&\leq \sum_{i=1}^n I(X_{\mathcal{T},i};U_{\mathcal{K},i}|X_{\mathcal{T},i},\bar{Q}_{i})+ n\epsilon_n\\
&= nI(X_{\mathcal{T},Q'};U_{\mathcal{K},Q'}|X_{\mathcal{T}^c,Q'},\bar{Q}_{Q'},Q')+ n\epsilon_n\\
&= nI(X_{\mathcal{T}};U_{\mathcal{K}}|X_{\mathcal{T}^c},Q)+ n\epsilon_n,
\end{align}
and similarly, from \eqref{eq:MK_C_Ineq_3_1}, we have
\begin{align}
R_{\mathcal{T}}\leq
& C_{\mathcal{S}}  - \sum_{k\in\mathcal{S}}I(Y_{k};U_{k}|X_{\mathcal{L}},Q) + I(X_{\mathcal{L}};U_{\mathcal{S}^c}|X_{\mathcal{T}^c},Q).\nonumber
\end{align}
This completes the proof of Theorem~\ref{th:MK_C_Main}.\qed
\fi

\noindent Finally, a standard time-sharing argument completes the proof of Theorem~\ref{th:MK_C_Main}.\qed

\vspace{-0.3cm}

\section{Concluding Remarks}

In this paper, we study transmission over a cloud radio access network under the framework of oblivious processing at the relay nodes, i.e., the relays are not allowed to know, or cannot acquire, the users' codebooks. Our results shed light (and sometimes determine exactly) what operations the relay nodes should perform optimally in this case. In particular, perhaps non-surprisingly it is shown that compress-and-forward, or variants of it, generally perform well in this case, and are optimal when the outputs at the relay nodes are conditionally independent on the users inputs. Furthermore, in addition to its relevance from a practical viewpoint, restricting the relays not to know/utilize the users' codebooks causes only a bounded rate loss in comparison with the non-oblivious setting (e.g., compress-and-forward and noisy network coding perform to within a constant gap from the cut-set bound in the Gaussian case).  

\vspace{-0.5cm}

\bibliographystyle{ieeetran}
\bibliography{ref}

% APPENDICES TO BE RELEGATED TO A SEPARATE FILE

\end{document}